\documentclass[a4paper,11pt]{article}
\pdfoutput=1

\usepackage[left=3cm,right=3cm,top=3cm,bottom=3cm]{geometry}

\usepackage{bbm}
\usepackage{graphicx}
\usepackage{amssymb}
\usepackage{amsthm}
\usepackage{amsmath}
\usepackage{xspace}

\theoremstyle{plain}

\newtheorem{theorem}{Theorem}[section]
\newtheorem{lemma}[theorem]{Lemma}
\newtheorem{observation}[theorem]{Observation}

\theoremstyle{definition}
\newtheorem{definition}[theorem]{Definition}

\newcommand{\ie}{i.\,e.\xspace}
\newcommand{\eg}{e.\,g.\xspace}
\newcommand{\dist}[1]{\ensuremath{\mathrm{dist}(#1)}}

\title{A Competitive Ratio Approximation Scheme\\
 for the $k$-Server Problem in Fixed Finite Metrics}
 \author{Tobias M\"{o}mke\\ Department of Computer Science\\ Saarland
 University\\ {\tt moemke@cs.uni-saarland.de}}
\begin{document}
\maketitle
\begin{abstract}
We show how to restrict the analysis of a class of online problems that includes the $k$-server
problem in finite metrics such that we only have to consider finite sequences of request. When
applying the restrictions, both the optimal offline solutions and 
the best possible deterministic or randomized online solutions only differ by at most an arbitrarily
small constant factor from the corresponding solutions without restrictions. 
Furthermore, we show how to obtain an algorithm with best possible deterministic
or randomized competitive ratio for the restricted setup.
Thus, for each fixed finite metric our result qualifies as a competitive ratio approximation scheme as
defined by G\"{u}nther et al.~\cite{GMMW13}.
\end{abstract}
\section{Introduction}
The $k$-server problem is a classical online problem that gained considerable
attention since it was proposed by Manasse et al.~\cite{MMS90}. Given an
$n$-point metric space and $k$ servers, the aim is to cover requested points
with servers such that the cost incurred by moving servers is minimized. More
precisely, the input instance is the metric space and a sequence of points that
are revealed to the online algorithm one by one. The online algorithm initially 
covers a fixed set of $k$ points with servers. After each requests it has to
adapt the set of covered points such that the requested point is included. The cost to
adapt the set of points is the sum of distances in the metric traveled by the servers.
The goal of the algorithm is to minimize the overall cost for moving servers.

One of the reasons why the $k$-server problem is important is that is is a
generalization of well known online problems such as (weighted) paging, where
there is fast memory that can store $k$ pages and the online algorithm has to decide
which of the pages is overwritten whenever a page not stored in fast memory is
requested.
At the same time, the $k$-server problem is a restriction of metrical task
systems, where a metric space of $n$ states is given and the requests
are $n$-vectors of processing costs. The online algorithm has to decide where to
move as changing states causes transition cost, but lower processing
costs may have a stronger influence than the transition.

The quality of online algorithms is most commonly measured in terms of its
competitive ratio, which is the worst case ratio of the cost of the solution computed by an
online algorithm and the cost of the best possible solution of an offline algorithm (that
knows the complete sequence of requests). To obtain bounds on the worst case cost, we
commonly construct an adversary that knows the algorithm and creates the
instance depending on the properties of the algorithm. For randomized online algorithms, it
is common to use the expected competitive ratio and to assume an oblivious
adversary. Thus, we compare the worst-case expected cost of a solution computed
by the online algorithm to the optimal cost of an offline algorithm. To create
the worst-case instance, the adversary can use the algorithm and the probability
distribution of the random bits used by the algorithm, but it does not know the
content of the random tape. The adversary has to create the complete sequence of
requests before the computation starts.

For paging, tight competitive ratios are know. There is a deterministic $k$-competitive
online algorithm \cite{ST85} and a $(H_k)-competitive$ randomized
algorithm \cite{MS91} (where $H_k$ is the $k$th harmonic number), 
whereas there are no online algorithms with better competitive ratios
\cite{ST85,FKLMSY91}.

Similarly, metrical task systems are well understood. There is a tight
deterministic competitive ratio of $2n-1$ by Borodin et al.~\cite{BLS92}. For
randomized online algorithms, the best possible expected competitive ratio is in
$\Omega(\log n/\log\log n)$ \cite{BLMN05,BBM06} and $O(\log^2n \log\log n)$
\cite{FM03}.

In contrast to these two problem and despite their close relation, the
$k$-server problem turns out to be much harder to analyze. The lower bounds on
the achievable competitive ratio are $k$ for
deterministic and an expectation of $O(\log k)$ for randomized online algorithms
\cite{MMS90,BKRS92,BLMN05,BBM06}, which matches the lower bounds for paging. In fact, these ratios are
conjectured to be tight. The best upper bound on the deterministic
competitive ratio is $2k-1$ using the so-called work-function algorithm by Koutsoupias and
Papadimitriou \cite{KP95}, but the
analysis is not known to be tight. For randomized algorithms, a recent
development of strong techniques let to an expected $O(\log^3n\log^2k \log\log n)$-competitive 
algorithm by Bansal et al.~\cite{BBMN11}. The result depends on the elegant use
of certain linear programs \cite{BBN07,BBN10a} and on embedding techniques related to
hierarchically separated trees \cite{BBN10}.
The result of Bansal et al.~\cite{BBMN11}, however, leaves plenty of space for improvements.
The most important issue is that the techniques used to obtain the algorithm inherently depend on
metric embeddings with the effect that the competitive ratio depends on the number of
points in the metric. There does not seem to be an obvious way to circumvent this
dependency without changing the direction of research.

There is an interesting approach to find algorithms with almost optimal
competitive ratio that was introduced and used for scheduling problems by G\"{u}nther 
et al.~\cite{GMMW13}. They called the concept \emph{competitive ratio approximation schemes} 
as it is somewhat related to polynomial time approximation schemes in the area of approximation
algorithms. The precise definition is that an online algorithm is a competitive
ratio approximation scheme if for any constant $\varepsilon>0$ given as a
parameter, the algorithm computes a $(c+\varepsilon)$-competitive
solution and where $c$ is the best possible strict competitive ratio of any algorithm for
that problem. At the same time it computes a value $c'$ such that $c \le  c' \le
c+\varepsilon$. The strictness was implicitly assumed when the concept was
introduced \cite{GMMW13}. We would like to emphasize that we aim for unconditional results
and thus we do not restrict the resources of online algorithms.
 
\subsection{Overview of Results and Techniques}
We show that for the $k$-server problem in fixed metrics with a finite number of
points there is a competitive ratio approximation scheme for both the deterministic
and the randomized setup. After giving some formal definitions in
Section~\ref{sec:prelim}, in Section~\ref{sec:limit} we show that accepting an arbitrarily
small error $\epsilon$ allows us to restrict our attention to finite request
sequences. We crucially use the property of the $k$-server problem to be
restartable as shown by Komm et al.~\cite{KKKM13}. However, we have to relate
the cost of an optimal solution to the number of request. The main difficulty of
the proof is to ensure that the adversary does not lose too much power when
aborting long sequences of zero-cost requests. We note that very recently, Megow
and Wiese \cite{MW13} independently obtained a competitive ratio approximation schemes
for for minimizing the makespan in the online-list model where similar to the
result of this paper they use that it is sufficient to consider a constant schedule history.

Given the restriction to finite request sequences, in Section~\ref{sec:alg} we
show how to obtain the competitive ratio approximation schemes. While for
deterministic algorithms we simply have to minimize the maximal competitive
ratio over finitely many adversaries, the randomized case is more involved. We
provide a linear program that gives a suitable sequence of probability
distributions in order to obtain a randomized competitive ratio approximation
scheme. 

Finally, in Section~\ref{sec:generalization}, we analyze the
properties of the $k$-server problem in order to determine a general class of
problems for which our techniques can be applied.

\section{Preliminaries}\label{sec:prelim}
An online-algorithm $A$ is $c$-competitive if there is a constant $\alpha$ such that
for any initial configuration and any sequence of requests, the cost of the
solution computed by $A$ is at most $c \cdot \mathrm{opt} +
\alpha$, where opt is the cost of an optimal solution obtained by an offline
algorithm. If the cost of the solution is at most $c \cdot \mathrm{opt}$, $A$ is
\emph{strictly} $c$-competitive.
Similarly, a randomized online algorithm $A$ is (strictly) $c$-competitive
against an oblivious adversary, if the cost of the solution is at most $c \cdot
\mathrm{opt} + \alpha$ (resp. $c \cdot \mathrm{opt}$) in expectation. 

We denote finite metrics used in this paper by $M=(V,d)$, where $V$ is a
finite set of points and $d$ is the corresponding function.

We use problem properties related to the definitions in Komm et
al.~\cite{KKKM13}. To \emph{reset} the computation means to forget the history
and act as if the current problem configuration is the initial configuration.
The algorithm does not know that the reset took place. A reset requires that any
configuration is initial. This can easily be achieved for the $k$-server problem
by permuting the names of the points in $V$. For a detailed analysis of such
properties we refer to Komm et al.~\cite{KKKM13}.

\begin{definition}[$D$-resetting]\label{def:resetting}
For any online problem $P$ and any positive integer $D$ we call an
algorithm $A_D$ \emph{$D$-resetting} if it resets the computation after each
multiple of $D$ paid requests (\ie, zero cost answers are not counted).
\end{definition}

We say that an online algorithm for the $k$-server problem is \emph{lazy} if it
answers requests by moving at most one server. Manasse et al.~\cite{MMS90}
showed that we may assume laziness without loss of generality.

\section{Controlling the Number of Steps}\label{sec:limit}

A metric given as input is normalized if the minimum distance between two points
is exactly one. If this is not the case, we multiply all distances by 
$\gamma := (\min_{u,v \in V}{d(u,v}))^{-1}$ to obtain a normalized version. Note
that if $M$ is finite and fixed, $\gamma$ is a constant.

\begin{lemma}\label{lem:limit}
  Let $A$ be an expected $c$-competitive online algorithm for the $k$-server problem.
  Then, for any constant $\varepsilon > 0$ and any finite metric $M$, there is a constant $D$ such 
  that there is a $D$-resetting expected $(c+\varepsilon)$-competitive online
  algorithm $A_D$ for the $k$-server problem in $M$. If $A$ is deterministic, then also
  $A_D$ is deterministic.
\end{lemma}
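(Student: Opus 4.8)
The plan is to convert the given algorithm $A$ into a $D$-resetting algorithm $A_D$ by simply running $A$ but forcing a reset after every $D$ paid requests, then bound the competitive-ratio loss. The intuition is that resetting only hurts us at the "seams" between blocks of $D$ paid requests: within each block $A_D$ behaves exactly like a fresh copy of $A$, so its cost on each block is at most $c \cdot \mathrm{opt}_{\text{block}} + \alpha$ in expectation, where $\mathrm{opt}_{\text{block}}$ is the optimal cost of serving that block from its starting configuration. Summing over blocks, the total additive $\alpha$ terms are dangerous: if there are many blocks, the accumulated additive error grows without bound. The key observation that saves us is \emph{normalization}: once $M$ is normalized so the minimum distance is $1$, every paid request costs at least $1$, so a block containing $D$ paid requests forces the optimal solution to pay at least some amount growing with $D$. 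This lets us charge the per-block additive constant $\alpha$ against the optimal cost incurred on that block, turning the additive error into a multiplicative $(1+\text{small})$ factor once $D$ is large enough.

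\textbf{First I would} set up the block decomposition carefully. Fix $\varepsilon$ and $M$, normalize $M$ by the constant factor $\gamma$ (which does not change competitive ratios), and choose $D$ large relative to $\alpha$, $c$, and the diameter of $M$. Run $A_D$ as $A$ with resets after each $D$-th paid request. Let the requests be partitioned into maximal segments $\sigma_1, \sigma_2, \dots$ where each $\sigma_i$ ends exactly at a multiple of $D$ paid requests. On segment $\sigma_i$, the online cost is at most $c \cdot \mathrm{OPT}(\sigma_i, p_i) + \alpha$, where $p_i$ is the configuration from which $A_D$ starts serving $\sigma_i$ (its reset configuration) and $\mathrm{OPT}(\sigma_i, p_i)$ is the cheapest way to serve $\sigma_i$ starting at $p_i$. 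I then compare $\sum_i \mathrm{OPT}(\sigma_i, p_i)$ to the true global optimum $\mathrm{opt}$. Because the optimal offline solution passes through \emph{some} configuration $q_i$ at each block boundary, and moving from $q_i$ to the reset configuration $p_i$ costs at most the diameter $\Delta$ of $M$, we have $\mathrm{OPT}(\sigma_i, p_i) \le \mathrm{OPT}(\sigma_i, q_i) + \Delta$, and the $\mathrm{OPT}(\sigma_i, q_i)$ terms telescope into $\mathrm{opt}$.

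\textbf{The hard part will be} controlling the $\Delta$ terms and the $\alpha$ terms \emph{per block}. This is where I expect the main obstacle, and it is precisely the difficulty the introduction flags: a block of $D$ paid requests can have arbitrarily small optimal cost relative to $D$ only if the paid requests are cheap, but after normalization each paid request costs at least $1$, so a block with $D$ paid requests forces $\mathrm{OPT}(\sigma_i, p_i) \ge D$ (roughly). Wait --- this is subtle, because a single point could be requested repeatedly at zero cost once a server sits on it; the definition of $D$-resetting counts only \emph{paid} requests, so each block genuinely contains $D$ requests each of positive, hence $\ge 1$, cost to the optimum. Thus per block we can absorb $\alpha + \Delta$ into $\varepsilon \cdot \mathrm{OPT}(\sigma_i)$ by taking $D \ge (\alpha + \Delta)/\varepsilon$ or similar, giving per-block online cost at most $(c+\varepsilon)\,\mathrm{OPT}(\sigma_i, q_i) + (\text{bounded})$ and a single global additive constant from the final incomplete block. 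Summing yields the desired $(c+\varepsilon)$-competitive bound with a new additive constant.

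\textbf{The remaining care} is to handle the final, incomplete block (fewer than $D$ paid requests), which contributes only a bounded additive term absorbed into the new $\alpha'$, and to verify that the reset is legitimate for the $k$-server problem, which holds since every configuration is initial up to renaming of points in $V$. In the deterministic case, all expectations collapse and the identical argument goes through, so $A_D$ is deterministic whenever $A$ is; I would state this as an immediate consequence of the fact that no step of the construction introduces randomness.
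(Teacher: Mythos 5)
There are two genuine gaps here. The first is that your key quantitative step misreads Definition~\ref{def:resetting}: ``paid requests'' are requests for which the \emph{algorithm's} answer has nonzero cost, not the optimum's. A request can force $A_D$ to move while being free for the optimum --- this is exactly how an adversary extracts cost from an online algorithm --- so a block containing $D$ paid requests does \emph{not} give $\mathrm{OPT}(\sigma_i,p_i)\ge D$; it only gives that $A_D$ itself paid at least $D$ on the block. For deterministic $A$ this is repairable: since $A_D$ behaves like a fresh copy of $A$ on the block, $D \le (\text{block cost of } A_D) \le c\cdot \mathrm{OPT}(\sigma_i,p_i)+\alpha$, hence $\mathrm{OPT}(\sigma_i,p_i) \ge (D-\alpha)/c$, and choosing $D \ge \alpha + c\bigl(B + (cB+\alpha)/\varepsilon\bigr)$ with $B = k\cdot\max_{s,t\in V}d(s,t)$ (note that a configuration change costs up to $k$ times the diameter, not the diameter $\Delta$ you use) lets your charging and telescoping argument go through, essentially recovering the paper's deterministic analysis.

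The second gap is more serious: the randomized case, which is the real content of the lemma and occupies most of the paper's proof, is not addressed. Your closing sentence inverts the logic --- your block argument (even after the repair above) is inherently an argument about deterministic $A$, and the randomized case is not a corollary of it. For randomized $A$, whether a request is paid is a random event, so the reset times, the blocks $\sigma_i$, and the start configurations $p_i$ are random variables that differ across runs of the algorithm; per run there is no competitiveness guarantee (only in expectation), so the per-block bound $c\cdot\mathrm{OPT}(\sigma_i,p_i)+\alpha$, the telescoping against the fixed offline optimum, and the lower bound on per-block optimal cost all break down. This is precisely the difficulty the paper isolates: an oblivious adversary can issue arbitrarily long stretches of requests that are free for Opt yet extract a small positive \emph{expected} cost from $A$ on each request, so the number of requests the adversary needs cannot be bounded without modifying the algorithm. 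The paper's proof handles this by splitting phases into sub-phases, showing that after a constant number $\xi_k$ of requests in such a stretch all but an $\varepsilon/(2B)$ probability mass of the deterministic instantiations of $A$ have reached the ``correct'' configuration, forcing the remainder there at expected extra cost $\varepsilon/2$ per sub-phase, and recursing on the number of servers via $\xi_{i+1}=\xi_i\cdot\xi_2$. Your proposal contains no counterpart to this mechanism, so the lemma as stated, for expected $c$-competitive algorithms, remains unproven.
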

\begin{proof}
  As discussed above, we can assume, without loss of generality, that the minimum
  distance in the metric is one.
  Let us first assume that $A$ is a deterministic algorithm in order to clarify the 
  line of argumentation. We assume that $c$ is at most a constant depending only on $k$ 
  (otherwise we use the $2k-1$-competitive algorithm).

  We first give a lower bound on the cost of an optimal solution such
  that the competitive ratio of the claim is satisfied. The analysis is analogous 
  to that by Komm et al.~\cite{KKKM13}.
  Let $B := k \cdot  \max_{s,t \in V} d(s,t)$, which is an upper bound on the maximum
  transition cost within the metric space $M$. Then restarting an algorithm causes an 
  additional cost of at most $B$ compared to an optimal algorithm (which may
  have arranged the positions of the $k$ servers as to optimize the transition
  costs for the subsequent answers).  To be $(c+\varepsilon)$-competitive, $A_D$
  has to ensure that the optimal cost opt for the given request sequence is high
  enough such that $\varepsilon \cdot \mathrm{opt} \ge B$. Thus, if the restarts
  are performed after an optimal cost of $\mathrm{opt} \ge B \varepsilon^{-1}$
  is reached, the claim is true. We call the computations between resets the
  \emph{phases} of the computation. Note that $A$ is $c$-competitive in each of the
  phases that are created due to the restarts. Thus it is sufficient to consider
  a single phase such that the analysis is independent of the initial
  configuration of servers and we aim to find an upper bound $D$ on the minimum number of
  requests per phase such that the adversary keeps its full strength.

  In general, the value of opt is independent of the number of request as an optimal
  solution may have arbitrarily long sequences of zero cost answers. We now
  argue that for each instance, the lengths of sequences that can be
  answered with zero cost by an optimal offline solution beyond a certain bound cannot
  help the adversary.

  Let us fix an optimal solution Opt to the offline problem for the
  sequence of points requested by the adversary. We assume without loss of
  generality that each time Opt has a zero-cost answer, $A$ moves.
  Otherwise the adversary may simply skip that request. The skipping may cause
  $A$ to change its internal state and thus its behavior, but then there is another
  $c$-competitive algorithm that behaves just like $A$ except that it does not change its
  behavior due to a zero cost answer. This cannot lead to a competitive ratio
  worse than $c$ as the adversary can exploit differing behavior without
  changing opt. Therefore, we assume that $A$ ignores any request
  that does not require a movement of servers.

  Let $\phi := c\cdot opt + \alpha$, where $\alpha$ originates from the definition of competitiveness
  and is a constant depending on $A$.  
  In each phase, if Opt has a zero-cost answer more than $\phi$ times,
  $A$ has moved too much to be $c$-competitive as due to the normalization each move
  has a cost of at least one. Therefore, the total number of requests per phase is limited to
  $D=\phi \cdot \mathrm{opt}$, \ie, $\phi$ requests for each step in which
  opt moves with non-zero costs.

  Now we discuss the differences if $A$ is a randomized algorithm with
  expected competitive ratio $c$. As before, if the optimal cost of each phase
  is at least opt, we are done. However, it is more complicated to bound the
  number of zero-cost answers of an optimal solution. Let us fix an arbitrary
  adversary for $A$ and an optimal offline solution Opt to the sequence of
  the adversary's requests. Let $X_i$ denote the expected value of the cost of
  the solution computed by $A$ for the $i$th request (with respect to the whole
  phase, \ie, not depending on the previous answers).
  
  Similar to the deterministic case we can assume that $X_i$ is
  nonzero if Opt has a zero cost answer to the $i$th request, but the value of
  $X_i$ may be much smaller than one.

  We now argue that the adversary may skip certain 
  steps without losing too much. Let us double the bound that we used in order to
  specify opt and assume that the cost of an optimal offline solution is at least 
  $\mathrm{opt} \ge 2 B \varepsilon^{-1}$. (If opt is smaller, then there is no
  subsequent reset and thus we do not have to consider the allowance of
  $2\varepsilon B$.)
  This way, performing a reset each time the optimal offline solution has a cost of at least
  opt leads to an expected competitive ratio of at most $c + \varepsilon/2$ and
  we gain the freedom to accept an additional loss of $\varepsilon \cdot \mathrm{opt}/2$ in
  each phase.

  We divide each phase into at most opt sub-phases, each of them (except the
  first one) starting with a request
  that causes Opt to answer with non-zero cost (\ie, a cost of at least one). 
  We show that we may modify $A$ such that, after constantly many requests in
  each sub-phase, the adversary can skip the remaining zero-cost requests
  without losing power and the modified algorithm has an additional expected
  cost of at most $\varepsilon/2$. Thus, after the $i$th sub-phase the total
  increase of the expected costs due to the modification is at most 
  $i \cdot \varepsilon/2$.

  We analyze the randomized online algorithm as an infinite collection of
  deterministic algorithms, one for each string of random bits. Furthermore we
  assume without loss of generality that each of these algorithms is lazy.
 
  We use that any of these algorithm is forced to position a server at
  the requested point. Thus, if $k=1$, similar to the deterministic case we can
  assume that the length of a phase is exactly one as repeated requests of the
  same point cause zero expected cost of the algorithm.
  To simplify the presentation, we first analyze the case of exactly
  two servers and generalize to arbitrary constants $k$ afterwards.

  An infinite zero-cost request sequence of the adversary has to alternate
  between two points $p$ and $q$ occupied by the two servers. After the first
  request of the sequence (say $p$), each of the algorithms has to move servers
  in each step until it reaches the same configuration as the adversary (namely
  one server on $p$ and the other one on $q$). We call this the \emph{correct}
  configuration. Thus, for each of the algorithms the generated cost is one per
  request until the correct configuration is reached.
  Afterwards all remaining requests cause zero cost. After 
  $2(c \cdot opt + \alpha)$ requests, the probability to have chosen an algorithm with
  the correct configuration is at least $1/2$ as otherwise the expected
  competitive ratio is larger than $c$. Similarly, after $2B(c \cdot opt + \alpha)/\varepsilon$
  requests, the probability to have selected an algorithm that did not yet enter
  the correct configuration is at most $\varepsilon/(2B)$. We change $A$ such
  that if the adversary alternately requests two points for 
  $\xi_2=2 + 2B(c \cdot opt + \alpha)/\varepsilon$ steps, all corresponding algorithms
  deterministically enters the correct configuration. 
  (We added two steps to allow Opt to enter the correct configuration.) The configuration change
  costs at most $B$, which leads to an expected cost of $\varepsilon/2$. The
  remaining discussion of $k=2$ is analogous to the deterministic case, except
  that we use the parameter $\varepsilon/2$ such that the total loss of the
  competitive ratio adds up to $\varepsilon$. (The loss of $\mathrm{opt} \varepsilon/2$
  due to the modification of the algorithm corresponds to a loss of $\varepsilon/2$ for
  the competitive ratio.)

  To generalize the argumentation to arbitrary $k$, we determine a factor on the
  number of requests recursively. For any $i$, the variable $\xi_i$ determines the number of
  required requests. Suppose there are $i+1$ servers and the adversary uses at
  most $i+1$ points in the sub-phase. Let $p$ one of the $i+1$ points such that
  within the sub-phase, a maximal subsequence of requests to only $i$ points does not
  include $p$. Then, the length of the subsequence is at most $\xi_i$ as otherwise a
  configuration change to cover all $i+1$ points would be affordable.
  Therefore, after $\xi_i \cdot \xi_2$ requests, each algorithm that did not enter the
  correct configuration has at least $\xi_2$ transitions to not covered servers
  and setting $\xi_{i+1} = \xi_i \cdot \xi_2$ is sufficient. As $k$ is a constant, also $\xi_k$ is.
\end{proof}

The following theorem follows directly from Lemma~\ref{lem:limit}.

\begin{theorem}\label{cor:limited}
  Let $A$ be an online algorithm for the $k$-server problem such that its
  expected competitive ratio is $c$. Then for any $\varepsilon > 0$ and any
  fixed finite metric there is a constant size adversary (depending on $\alpha$) such that on
  the instance of the adversary, the expected competitive ratio of $A$ is at least $c - \varepsilon$.
\end{theorem}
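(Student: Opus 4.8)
The plan is to read the statement off from the phase structure already built in the proof of Lemma~\ref{lem:limit}. First I would record the standard characterization of the competitive ratio: since the expected competitive ratio of $A$ equals $c$, the algorithm is not expected $(c-\varepsilon/2)$-competitive, so for every additive constant $\alpha'$ there is a request sequence $\sigma$ whose expected cost exceeds $(c-\varepsilon/2)\,\mathrm{opt}(\sigma)+\alpha'$. Comparing this with the guaranteed upper bound $c\cdot\mathrm{opt}(\sigma)+\alpha$ forces $\mathrm{opt}(\sigma)\to\infty$ as $\alpha'\to\infty$, and on such sequences the raw ratio $E[\mathrm{cost}_A(\sigma)]/\mathrm{opt}(\sigma)$ exceeds $c-\varepsilon/2$. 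Thus arbitrarily large witnesses of ratio above $c-\varepsilon/2$ exist, and the remaining task is to compress one of them to constant size.

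Next I would invoke the truncation argument of Lemma~\ref{lem:limit}. A maximal zero-cost run of the fixed optimal solution in which every deterministic realization of $A$ has already reached the correct configuration contributes nothing to either $E[\mathrm{cost}_A]$ or $\mathrm{opt}$, so I may delete it; and the lemma shows that no realization can fail to reach the correct configuration for more than $\xi_k$ steps without violating $c$-competitiveness. Hence the witness $\sigma$ may be taken with every zero-cost run bounded by $\xi_k$, so that each phase (a block in which $\mathrm{opt}$ increases by a fixed constant $\mathrm{opt}_0$) has length at most the constant $D=\mathrm{opt}_0\cdot\xi_k$, while the overall ratio stays above $c-\varepsilon/2$.

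I would then fix $\mathrm{opt}_0$ to a constant large enough that $cB/\mathrm{opt}_0\le\varepsilon/2$, with $B$ and $c$ as in Lemma~\ref{lem:limit}; the dependence of the length bound $D=\mathrm{opt}_0\cdot\xi_k$ on $\alpha$ (through $\xi_k$) is exactly what forces the adversary size to depend on $\alpha$. Decomposing $\sigma$ into its $m$ phases and using linearity of expectation, the expected cost of $A$ summed over phases is at least $(c-\varepsilon/2)\,m\,\mathrm{opt}_0$, so some phase $j$ contributes expected cost at least $(c-\varepsilon/2)\,\mathrm{opt}_0$. Using resettability I relabel the points of $M$ so that the server configuration at the start of phase $j$ becomes the initial configuration; as $A$ cannot detect the relabelling, its behaviour on the isolated, relabelled block agrees with its behaviour on phase $j$ inside $\sigma$. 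This isolated instance has at most $D$ requests and, since the offline optimum can reach the optimal phase-$j$ trajectory at an extra cost of at most $B$, optimal cost at most $\mathrm{opt}_0+B$; its expected ratio is therefore at least $(c-\varepsilon/2)\,\mathrm{opt}_0/(\mathrm{opt}_0+B)\ge c-\varepsilon$, which is the desired constant-size adversary.

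The main obstacle is carrying the resetting step through the randomized setting. For a randomized $A$ the configuration at the start of phase $j$ is a random variable, so a single relabelling of $M$ cannot simultaneously normalise all realizations, and an oblivious adversary must commit to one fixed sequence. I would resolve this exactly as Lemma~\ref{lem:limit} does, viewing $A$ as the family of deterministic algorithms indexed by the random string and arguing with the expected per-step costs $X_i$: the averaging selects a phase whose expected cost is large over the entire distribution, and the relabelling is applied to the fixed request block rather than per realization. Verifying that this single relabelled block simultaneously preserves the expected-cost lower bound and the optimal-cost upper bound across the whole distribution is the delicate point of the argument.
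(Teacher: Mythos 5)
There is a genuine gap, and it is in your extraction step rather than in the randomized subtlety you flag. You assert that after relabelling the points so that $A$'s configuration at the start of phase $j$ becomes the initial configuration, ``$A$ cannot detect the relabelling,'' so its behaviour on the isolated block agrees with its behaviour on phase $j$ inside $\sigma$. This is false for general online algorithms: an online algorithm is a function of its \emph{entire request history} (its internal memory), not of its current server configuration alone. Relabelling makes the configuration a legal initial configuration (that is all the symmetry behind Definition~\ref{def:resetting} provides), but it does not erase the fact that inside $\sigma$ the algorithm has already processed $j-1$ phases of requests, whereas on your isolated instance it starts with empty memory; nothing forces the two behaviours to coincide. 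Resetting is a property of algorithms that are \emph{built} to forget, as $A_D$ is in Lemma~\ref{lem:limit}; it cannot be imposed from the outside on the given $A$. The same objection hits your truncation step: deleting a zero-cost run changes $A$'s state on all later requests, and your paraphrase of the lemma is inaccurate there --- the lemma does not show that ``no realization can fail to reach the correct configuration for more than $\xi_k$ steps''; it shows the probability of non-convergence becomes small and then \emph{modifies} the algorithm to force convergence. Both devices (ignoring zero-cost requests, forced convergence) are legitimate in Lemma~\ref{lem:limit} because there one only needs to exhibit \emph{some} $(c+\varepsilon)$-competitive $A_D$; they are not available to you, because the theorem demands an adversary that works against the original, unmodified $A$.

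The gap is not cosmetic, because with a size bound $D=\mathrm{opt}_0\cdot\xi_k$ that is uniform in $A$ (which is what your argument produces, and what the application in Section~\ref{sec:alg} needs) the statement fails for history-dependent algorithms. Consider an $A$ that runs a strictly $k$-competitive strategy until it has received $N$ requests and thereafter switches to a strategy whose ratio is exactly $10k$ (strictly $10k$-competitive from every configuration). Then $A$ is $10k$-competitive with additive constant $\alpha\le 10kB$ independent of $N$, and its competitive ratio is $10k$ (pad with $N$ free requests, then play hard sequences), yet every request sequence of length at most $N$ yields ratio at most $k$: the cost $A$ incurs on a late phase of a long witness simply cannot be reproduced by presenting that phase to $A$ from scratch, no matter how you relabel. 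Any correct proof must confront such algorithms, e.g.\ by first replacing $A$ by the zero-cost-oblivious modification used in the proof of Lemma~\ref{lem:limit} and proving the claim for that modified algorithm (which is then no longer literally the theorem as stated). For comparison: the paper gives no proof at all here --- it asserts the theorem ``follows directly'' from Lemma~\ref{lem:limit} --- and your reconstruction, which is surely the intended route (phases, truncation, averaging, reset), makes visible exactly the step that this one-line derivation glosses over; but as written your argument does not close that step, and the delicate point you do flag (random phase-start configurations) is secondary to it, since the memory problem already breaks the deterministic case.
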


\section{Algorithms}\label{sec:alg}
Due to Theorem~\ref{cor:limited}, there is a constant number of different request
sequences that we have to consider for any online algorithm in order to determine its
competitive ratio up to an arbitrarily small error. This enables us obtain a
deterministic online algorithm with almost best-possible competitive ratio by finding a strategy
that minimizes the maximum cost
solution over all adversaries. Let $m$ be the maximum
number of requests to be considered. Then there are $n^m$ different request
sequences and $k^m$ different answer sequences if we assume the algorithm to be
lazy. Thus the number of different algorithmic strategies is the number of mappings from the set
of requests to the set of answers, \ie, $(k^m)^{(n^m)}$. 

For randomized algorithms the situation is more involved, because the answers
are not single servers but probability distributions over the servers. Therefore
the number of different randomized online algorithms is infinite. In order to
also handle randomized algorithms, we determine a rational convex polyhedron $P$
depending on a threshold parameter $\tau$ such that, if $P$ is non-empty, there
is a randomized algorithm such that $\tau$ is its strict expected competitive ratio. We
determine the value of $\tau$ (up to an arbitrarily small error) by binary search.
To determine $P$, we will use constantly many linear constraints and thus one can determine
whether the polyhedron is empty.

We denote a single request by $r \in \{1,\dotsc,n\}$ and a sequence of $t$
requests by $\rho_t = \{r_i\}_{i=1}^{t}$. A configuration $C$ is a subset of $k$
points in the metric, \ie, $C \subseteq 2^{\{1,\dotsc n\}}, |C| = k$. 
The answer to a request $r$ is a configuration $C$ that satisfies the request.
The sequence of the first $t$ answers is denoted by $\sigma_t =
\{C_i\}_{i=1}^t$. To simplify notation we implicitly assume that the names $r_i$
and $C_i$ belong to the sequences $\rho_i$ and $\sigma_i$ and that, \eg, $r_i'$ 
and $C_i'$ belong to $\sigma_i'$ and $\rho_i'$. We sometimes concatenate
sequences and elements. For instance $\sigma_i C$ appends a single configuration
$C$ to the sequence $\sigma_i$.
Let $T$ be the total number of request that we have to consider, obtained by
applying Theorem~\ref{cor:limited}.
We define $S_t(\rho_t)$ to be the set of all feasible answer sequences on a given
request sequence $\rho_t$. To simplify notation, we write $\sigma_t(\rho_t)$ instead of
$\sigma_t \in S_t(\rho_t)$.

Then, for $1 \le t \le T$, our linear program has variables 
$x(\rho_t,\sigma_t)$ for all possible
sequences $\rho_t$ and all answer sequences $\sigma_{t}$ compatible with
$\rho_t$, that is, with $\sigma_t \in S(\rho_t)$.

The configuration $C_0$ is the start configuration (or 
end configuration of the previous phase). 
The meaning of the value of the variable is 
$x(\rho_t,\sigma_t) = p(\sigma_t \mbox{ is selected} \mid \mbox{the adversary
requested $\rho_t$})$.  

Let $\dist{C,C'}$ be the minimum cost to move the servers from
configuration $C$ to $C'$. For a sequence of configurations
$\{C_i\}_{i=0}^k$, $\dist{\sigma} := \sum_{i=1}^{k} \dist{C_{i-1},C_i}$.

\begin{align}
  \label{eq:prob}
  \sum_{\sigma_t(\rho_t)} x(\rho_t,\sigma_{t}) &= 1& \mbox{for all }t, \rho_t\\
  \label{eq:consistent}
  \sum_{C(r_t)} x(\rho_t,\sigma_{t-1}C) 
  &=  x(\rho_{t-1},\sigma_{t-1})
  &\mbox{for all }t>1, \rho_t, \sigma_{t-1}(\rho_{t-1})\\
  \label{eq:comp}
  \sum_{\sigma'_T(\rho_T)} x(\rho_T,\sigma'_T) \dist{\sigma'_T}& 
  \le \tau \dist{\sigma_T}&
  \mbox{for all }\rho_T, \sigma_T(\rho_T)\\
  \label{eq:nonzero}
  x(\rho_t,\sigma_{t}) &\ge 0& \mbox{for all }t, \rho_t, \sigma_{t}
\end{align}

The first set of constraint \eqref{eq:prob} together with \eqref{eq:nonzero} ensures
that we obtain probability distributions over all answer strings for each $t$,
$\rho_t$.
Furthermore, \eqref{eq:consistent} ensures that the distributions in consecutive
time steps are consistent.
The set of constraints \eqref{eq:comp} ensures
that the expected competitive ratio is smaller than $\tau$.

We may simplify some of the constraints and obtain
\begin{align}
  -\sum_{C(r)} x(r,C) &\le  - 1& \mbox{for all requests }r\\
  x(\rho_{t-1},\sigma_{t-1}) - \sum_{C(r_t)}  x(\rho_t,\sigma_{t-1}C)&
  \le 0&
  \mbox{for all }t>1, \rho_t, \sigma_{t-1}(\rho_{t-1})\\
  \sum_{\sigma'_T(\rho_T)} x(\rho_T,\sigma'_{T}) \dist{\sigma'_T} 
  &\le \tau \dist{\sigma_T}&
  \mbox{for all }\rho_T, \sigma_T(\rho_T)\\
  x(\rho_t,\sigma_{t}) &\ge 0& \mbox{for all }t, \rho_t, \sigma_{t}
\end{align}

Given a point in the polyhedron, the corresponding algorithm determines the
probability distribution $x'$ such that
\begin{align*}
x'(\rho_t,\sigma_t-1 C) =\,& p(\mbox{$C$ is selected} \mid \mbox{The adversary
requested $\rho_t$ }\\
& \mbox{and the previous sequence of answers was $\sigma_{t-1}$}).
\end{align*}
This is exactly
\[
x'(\rho_t,\sigma_{t-1} C') = 
\frac{x(\rho_t,\sigma_{t-1} C')}{\sum_{C: r_t \in C}x(\rho_t,\sigma_{t-1} C)}.
\]

\section{Generalization}\label{sec:generalization}

In the paper we focused on the $k$-server problem as it is the most important
problem in the context of our research. The used techniques, however, can be
generalized.

The properties of the $k$-server problem that we used in the proof of
Lemma~\ref{lem:limit} boil down to properties such as that we have to be able to
normalize the step costs, restart the computation at any configuration, and there is an upper
bound on the cost in each step. To rigorously specify these properties,
we can use the notation from Komm et al.~\cite{KKKM13}. To give a short summary of the
terminology, a partition function of an online problem is an assignment of costs
to separate steps of the computation. For a given partition function, a state
is an equivalence class of configurations that are not distinguishable with
respect to the cost. A problem is symmetric if the computation can begin in each
state and it is opt-bounded if, for any pair of state, the optimal offline cost of any
sequence of requests only differs by a constant when starting from either of the
states. A problem is request-bounded if the maximal cost to answer a single
request has a constant upper bound. For our results, we need one additional property.
\begin{definition}[Step Cost Normalization]
  A partitionable online problem $P$  with the goal to minimize costs is called
  \emph{step-cost normalizable} if there is a constant $\gamma$
  such that the minimum cost of any non-zero cost answer to a request is at least
  $\gamma^{-1}$. The scaled problem where all costs are multiplied by $\gamma$
  is the \emph{step-cost-normalized version of $P$} and the $\gamma$ is called the
  normalization factor.
\end{definition}
Using these properties, we obtain the following result.
\begin{observation}\label{obs:gen}
Lemma~\ref{lem:limit} can be adapted to any symmetric,
opt-bounded, request-bounded, step-cost normalizable minimization problems that
allows for an algorithm with (expected) constant competitive ratio and that has finitely many
states such that the transition between states has nonzero cost and all states
are reachable from any state. We can use algorithms analogous to those in Section~\ref{sec:alg}.
\end{observation}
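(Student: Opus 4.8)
The plan is to verify that every concrete step in the proof of Lemma~\ref{lem:limit} uses only the abstract properties listed in the statement of Observation~\ref{obs:gen}, and then to re-run the same argument verbatim at the abstract level. Concretely, I would walk through the lemma's proof and, at each point where a $k$-server-specific fact was invoked, identify the corresponding abstract hypothesis. First, the normalization ``the minimum distance is one'' is replaced by step-cost normalizability: multiplying by $\gamma$ makes the minimum cost of any nonzero-cost answer at least one, so the key inequality ``each paid move costs at least one'' still holds. Second, the definition of $B$ as $k\cdot\max_{s,t}d(s,t)$, the upper bound on the cost of a single transition (and hence of a reset, which is a single state change), is replaced by request-boundedness (bounding the cost of answering one request) together with the finiteness of the state set, which bounds the cost of any reset by a constant. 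Third, the reduction ``$A$ is $c$-competitive in each phase, independently of the initial configuration'' uses symmetry (any state may serve as an initial state) together with opt-boundedness, which guarantees that restarting from the ``wrong'' state only changes the optimal offline cost by an additive constant; this is exactly what lets us charge the reset overhead against $\varepsilon\cdot\mathrm{opt}$.

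The genuinely delicate part of the original proof is the bound on the number of zero-cost requests the adversary can usefully insert, which in the $k$-server setting was handled by the recursive construction of the $\xi_i$. I would argue that this step, too, is abstract: it rests only on (i) the existence of a constant number of states, (ii) the fact that the transition between distinct states has nonzero cost (hence, after normalization, cost at least one), and (iii) reachability of every state from every other state. Property (iii) guarantees that the adversary can always drive an algorithm into a ``correct'' target state, property (ii) guarantees that each step taken by an algorithm not yet in the correct state costs at least one (so expected cost accumulates linearly in the number of requests), and property (i) ensures the recursion terminates after a bounded number of levels with a constant bound $\xi$. The finitely-many-states hypothesis plays the role that ``the adversary uses at most $i+1$ points'' played in the concrete proof: there are only constantly many states to be forced into, so the nested-subsequence argument bottoms out.

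Having matched each property, I would then state that the randomized analysis -- decomposing $A$ into a collection of deterministic algorithms indexed by random strings, bounding $X_i$, and modifying $A$ so that after $\xi$ requests all algorithms deterministically reach the correct state at a one-time expected cost of $\varepsilon/2$ -- goes through unchanged, since it never used any metric-specific structure beyond the four listed properties. The conclusion is that there is a constant $D$ and a $D$-resetting algorithm $A_D$ of expected competitive ratio $c+\varepsilon$, exactly as in Lemma~\ref{lem:limit}. Finally, for the algorithmic half I would observe that the linear program of Section~\ref{sec:alg} is phrased purely in terms of requests, configurations (states), feasible answer sequences, and a transition-cost function $\dist{\cdot,\cdot}$; none of these is specific to $k$-server, so the deterministic minimization over finitely many adversaries and the LP-based randomized scheme apply verbatim to any problem satisfying the hypotheses.

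I expect the main obstacle to be the recursive $\xi_i$ argument: the concrete proof tacitly identifies ``points used by the adversary'' with the degrees of freedom of the configuration, whereas in the abstract setting the right notion is ``states,'' and I would need to check carefully that a maximal subsequence avoiding one target state still has bounded length under opt-boundedness plus request-boundedness, so that the recursion $\xi_{i+1}=\xi_i\cdot\xi_2$ remains valid with a bound depending only on the (constant) number of states rather than on anything problem-specific.
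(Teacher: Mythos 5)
Your proposal is correct and follows essentially the same route as the paper's own proof sketch: resets are justified by symmetry plus opt-boundedness, the constant bound on nonzero-cost steps by step-cost normalizability, the forced configuration changes by request-boundedness, and the laziness-based recursion $\xi_{i+1}=\xi_i\cdot\xi_2$ is replaced by a state-frequency argument that needs exactly the finiteness, nonzero-transition-cost, and reachability hypotheses (the paper's $\xi'_n$). Your version is in fact more detailed than the paper's sketch, including the observation that the LP of Section~\ref{sec:alg} is already phrased abstractly in terms of states and transition costs.
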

\begin{proof}[Proof sketch]
The opt-boundedness together with the symmetry property enables resets of the
algorithm such that there is only a constant loss. The step-cost normalizability
is necessary to be able to fix a constant number of steps where the optimal
solution has nonzero costs. The request-boundedness enables the forced configuration
changes in the proof for randomized algorithms.

Instead of using the laziness assumption for $k$-server algorithms, we can bound
the number of steps using a variable $\xi'_n$ that is analogous to $\xi_k$ but
uses the frequency of any state (which requires that the state transitions have
nonzero costs and that all states are reachable).
\end{proof}

Note that the properties required by Observation~\ref{obs:gen} are fulfilled by
task systems if we fix a threshold such that the processing time of each state
(in the notation of task systems, not the equivalence classes defined above) in
each request is either zero or larger than the threshold.

The use of general properties in Observation \ref{obs:gen} is useful as this way
it can be readily applied to restricted problems. Note that a concrete problem
cannot offer that strength because restrictions of the problem may prevent
modifications of the algorithm's behavior as they are done in the proof.

\section*{Acknowledgment}
I would like to thank Dennis Komm for helpful discussions that inspired some of
the ideas. This work was supported by Deutsche Forschungsgemeinschaft grant BL511/10-1.

\end{document}